\newcommand\hl{\bgroup\markoverwith
    {\textcolor{yellow}{\rule[-.5ex]{.1pt}{2.5ex}}}\ULon}
\newtheorem{Proposition}{Proposition}
\newfont{\bb}{msbm10 scaled 1100}
\newcounter{mytempeqcounter}
\definecolor{BgYellow}{HTML}{FFF59C}
\definecolor{FrameYellow}{HTML}{F7A600}
\newtcolorbox{StickyNote}[1][]{%
    enhanced,
    before skip=2mm,after skip=2mm, 
    width=\columnwidth, boxrule=0.2mm, % width of the sticky note
    colback=BgYellow, colframe=FrameYellow, % Colors
    attach boxed title to top left={xshift=0cm,yshift*=0mm-\tcboxedtitleheight},
    varwidth boxed title*=-3cm,
    % The titlebox:
    boxed title style={frame code={%
        \path[left color=FrameYellow,right color=FrameYellow,
        middle color=FrameYellow]
        ([xshift=-0mm]frame.north west) -- ([xshift=0mm]frame.north east)
        [rounded corners=0mm]-- ([xshift=0mm,yshift=0mm]frame.north east)
        -- (frame.south east) -- (frame.south west)
        -- ([xshift=0mm,yshift=0mm]frame.north west)
        [sharp corners]-- cycle;
        },interior engine=empty,
    },
    sharp corners,rounded corners=southeast,arc is angular,arc=3mm,
    % The "folded paper" in the bottom right corner:
    underlay={%
        \path[fill=BgYellow!80!black] ([yshift=3mm]interior.south east)--++(-0.4,-0.1)--++(0.1,-0.2);
        \path[draw=FrameYellow,shorten <=-0.05mm,shorten >=-0.05mm,color=FrameYellow] ([yshift=3mm]interior.south east)--++(-0.4,-0.1)--++(0.1,-0.2);
        },
    drop fuzzy shadow, % Shadow
    fonttitle=\bfseries, 
    title={#1}
}
\begin{document}

\title{Pinching-Antenna System-Assisted Localization: \\ A Stochastic Geometry Perspective}

\author{Jiajun~He,~\IEEEmembership{Member, IEEE}, Xidong Mu,~\IEEEmembership{Member, IEEE,}~Hien Quoc Ngo,~\IEEEmembership{Fellow, IEEE},\\~and~Michail Matthaiou,~\IEEEmembership{Fellow, IEEE}}
        
% The paper headers
\markboth{}%
{Shell \MakeLowercase{\textit{et al.}}: Bare Demo of IEEEtran.cls for IEEE Journals}

\maketitle

\begin{abstract}
This paper proposes a novel localization framework underpinned by a pinching-antenna (PA) system, in which the target location is estimated using received signal strength (RSS) measurements obtained from downlink signals transmitted by the PAs. To develop a comprehensive analytical framework, we employ stochastic geometry to model the spatial distribution of the PAs, enabling tractable and insightful network-level performance analysis. Closed-form expressions for target localizability and the Cramér–Rao lower bound (CRLB) distribution are analytically derived, enabling the evaluation of the fundamental limits of PA-assisted localization systems without extensive simulations. Furthermore, the proposed framework provides practical guidance for selecting the optimal waveguide number to maximize localization performance. Numerical results also highlight the superiority of the PA-assisted approach over conventional fixed-antenna systems in terms of the CRLB.
\end{abstract}

\begin{IEEEkeywords}
Cram$\acute{\text{e}}$r-Rao lower bound, localization, pinching antenna, received signal strength.
\end{IEEEkeywords}

\IEEEpeerreviewmaketitle

\section{Introduction}

\IEEEPARstart{P}{inching}-antenna (PA) systems have garnered significant attention in recent years due to their reconfigurability and practical deployment potential \cite{ding2025PA}. Compared to traditional multiple-antenna systems, a PA system comprises multiple discrete dielectric particles, referred to as PAs, that are pinched on dielectric waveguides to facilitate signal transmission. Equally importantly, DOCOMO demonstrated in \cite{DOCOMO} that a PA system can be implemented in a cost-effective manner. Motivated by these promising characteristics, PA systems have been extensively studied in the literature to enhance communication functionality. The fundamental performance limits of PA systems under various network configurations were thoroughly examined in \cite{DingLoS}, demonstrating that a PA system can achieve significant performance gains over conventional antenna systems. In addition, a distinguishing feature of PA systems is the reconfigurability of the PAs. By adjusting their positions along the waveguide, the system can simultaneously serve multiple mobile users and enhance communication performance by minimizing the distance between each user and its associated PA.

Due to the advantages of PA systems, recent studies have investigated their potential to improve localization performance. For instance, Ding \cite{ding2025pinchingantennaassistedisaccrlb} analyzed the fundamental limits of a PA-assisted time-of-arrival (ToA) localization system in terms of the Cramér–Rao lower bound (CRLB), demonstrating that the use of PAs can significantly reduce localization error. Khalili $et$ $al.$ \cite{khalili2025pinchingantennaenabledisacsystems} investigated the large-scale deployment of waveguides to improve target diversity and sensing performance. Qin $et$ $al.$ \cite{Fu2025} investigated how a PA system can be used to enhance the performance of integrated sensing and communication (ISAC) systems in terms of communication and sensing rates, whereas the inner and outer bounds of the achievable communication–sensing rate region were derived in \cite{Ouyang2025CR} to gain insights into the information-theoretic limits of PA-assisted ISAC systems. Although PA systems have demonstrated strong potential in enhancing sensing and localization performance, a significant challenge cannot be overlooked when applying them to target localization. The sparse deployment of PAs along dielectric waveguides may limit their ability to capture a wide range of signals transmitted by mobile users. The signal reception on a single waveguide equipped with multiple PAs can lead to significant inter-antenna coupling, which lowers the localization accuracy. To address this issue, prior works, such as \cite{zhou2025channelestimationmmwavepinchingantenna} and \cite{wang2025wirelesssensingpinchingantennasystems}, have proposed the usage of additional antenna arrays or leaky coaxial cables (LCCs) to capture echo signals reflected by the target for channel estimation and target localization. 

From a practical implementation perspective, this work considers received signal strength (RSS)-based localization within a PA system, where multiple PAs transmit downlink signals to the target for localization.\footnote{Different from ToA- and angle-of-arrival (AoA)-based localization, which typically require wideband signals or antenna arrays to obtain range and angle measurements, RSS-based localization offers a lower-complexity alternative by eliminating the need for such hardware and reducing the burden of acquiring location-dependent features \cite{hcso-RSS}.} In addition, to the best of our knowledge, existing studies on PA system-assisted localization primarily focus on system-level or link-level designs and fail to provide insights into how different system configurations affect the overall localization performance. Motivated by this limitation, we aim to develop a stochastic geometric framework that studies the impact of the spatial distribution of PAs on the localization performance, thereby enabling network-level performance analysis and system optimization. The main contributions of this paper are listed as follows:

\begin{enumerate}
    \item \textit{Stochastic Modeling of the PA System}: A unified evaluation framework is developed using stochastic geometry, providing valuable insights into the design of PA-assisted localization systems from a network-level perspective. To our knowledge, this is the first work to leverage the Poisson line process (PLP) for analyzing the localization performance of PA systems.

    \item \textit{Localizability and CRLB}: To assess whether the mobile user (target) can be localized with a sufficient signal-to-interference-plus-noise ratio (SINR), the target localizability is investigated. Tractable expressions for the CRLB and its distribution are then derived, offering valuable insights into how network configurations and channel characteristics influence the localization performance.

    \item \textit{Guideline for System Design}: A comprehensive evaluation is conducted to gain insights into the fundamental limits of PA-assisted localization. Numerical results show that the PA-assisted system achieves significantly improved localization accuracy compared to that of using uniform linear arrays (ULAs). In addition, the proposed stochastic geometric framework offers practical guidelines for optimizing the number of waveguides and configuring network parameters to meet the desired localization requirements.
\end{enumerate}

%\textit{Notation}: Bold lower-case letters represent column vectors, while capital letters are matrices. The superscripts $(\cdot)^T$ and $(\cdot)^{-1}$, and $\text{tr}(\cdot)$ represent the transpose, inverse, and trace, respectively; $\|\cdot\|$ and $\mathbb{E}\{\cdot\}$ are the Euclidean norm and expectation operator. The estimate of the variable $x$ is $\hat{x}$. 

\vspace{-0.4cm}

\section{System Model}

We consider a localization problem (see \ref{fig-system model PLP}) in a two-dimensional (2D) space, where multiple PAs on waveguides transmit downlink signals to a target user for localization.   

\vspace{-0.4cm}

\subsection{Network Model}

According to Slivnyak’s theorem  \cite{stoyan2013stochastic}, we assume that the target $\mathbf{p}_t = [x_t \ y_t]^T$ is located at the origin \textit{O}, while the waveguides are randomly distributed within a 2D disk with center point at the origin and radius $R_a$ according to a PLP with line density $\lambda_l$. Specifically, the $k$-th waveguide is modeled by an undirected line $L_{k}$, which is: $L_k(\rho_k, \theta_k) = \left \{ (x,y) \in \mathbb{R}^2 : x\cos\theta_k + y\sin\theta_k = \rho_k  \right \}$, where $\rho_k$ denotes the perpendicular distance from the origin \textit{O} to the $k$-th waveguide, while $\theta_k$ is its inclination angle. During localization, the $m$-th PA on waveguide $k$ is activated from the predefined locations ($\mathbf{p}_{k,m} = [x_{k,m} \ y_{k,m}]^T$, where $k = 1, 2, \ldots$ and $m = 0, 1, \ldots$), while the predefined locations on the waveguide are modeled by a one-dimensional (1-D) homogeneous Poisson point process (HPPP) with node density $\lambda_s$. By indexing the waveguides in ascending order of their perpendicular distance $\rho_k$ from the target, the probability density function (PDF) and cumulative distribution function (CDF) of the distance $\rho_k$ to the $k$-th nearest waveguide are \cite{sun2020plp}
\begin{equation}
    \begin{split}
        f_{\rho_k}(x) &= \frac{e^{-2\pi \lambda_l x}(2\pi \lambda_l x)^k}{x(k-1)!}, \\
        F_{\rho_k}(x) &= 1 - e^{-2\pi \lambda_l x} \sum_{n=0}^{k-1} \frac{(2\pi \lambda_l x)^n}{n!}. \\
    \end{split}
\label{distribution - rho}
\end{equation}
Let $d_{k,1}= \|\mathbf{p}_{k,1} - \mathbf{p}_t \|$, where $\|\cdot\|$ denotes the Euclidean norm, be the distance between the target and the nearest PA ($m=1$) on the $k$-th waveguide. Then, conditioned on $\rho_k$, the PDF and CDF of $d_{k,1}$ are, respectively, given by
\begin{equation}
    \begin{split}
        f_{d_{k,1}|\rho_k}(r) &= \frac{2\lambda_s r}{\sqrt{r^2 - \rho_k}} e^{-2\lambda_s \sqrt{r^2 - \rho_k}}, \\
        F_{d_{k,1}|\rho_k}(r) &= 1 - e^{-2\lambda_s \sqrt{r^2 - \rho_k}}.
    \end{split}
\label{d-CDF}
\end{equation}

\subsection{Signal Model}

\begin{figure}[t]
\centerline{\includegraphics[width=0.62\columnwidth]{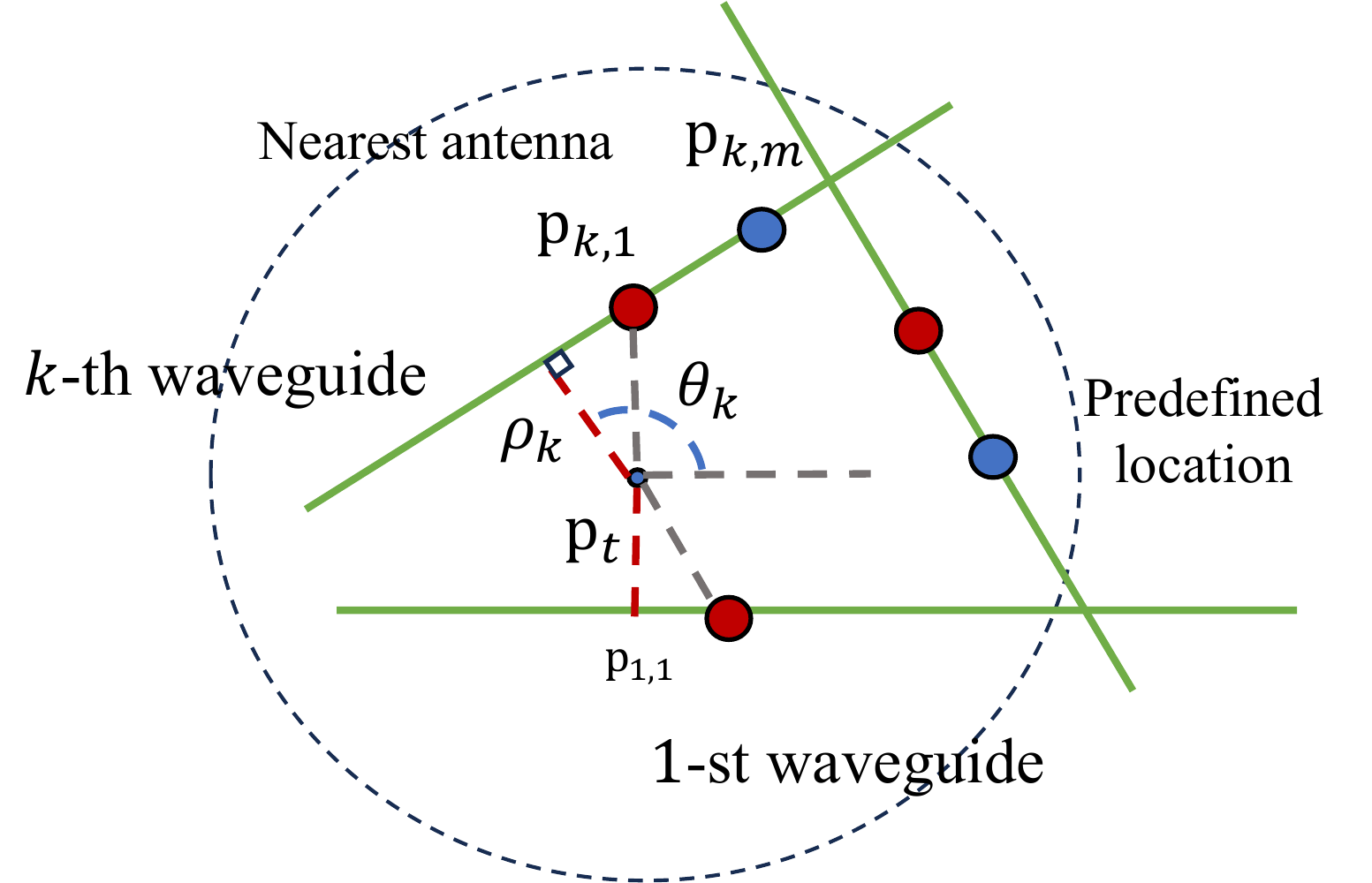}}
\caption{Illustration of the PLP-based PA-assisted localization system.}
\label{fig-system model PLP}
\end{figure}

Each waveguide activates its nearest PA to the target user, denoted as $\mathbf{p}_{k,1}$, during the localization process. Although all PAs serve the same target user for localization, the RSS sample has to be extracted from a specific PA to obtain distance information \cite{hcso-RSS, so-loc}. Consequently, each PA on a different waveguide transmits different symbols, which introduces interference and compromises the localization performance. For example, in cellular networks, the base station (BS) transmits signals containing various types of information, such as the cell-ID, that enable mobile users to extract location information from the corresponding BS \cite{so-loc}. Thus, the  signal received at the target user is
\begin{equation}  \label{eq-received signal}
    y = \sum_{\substack{k=1}} \sqrt{P_t} h_{k,1} s_{k} + w_n,
\end{equation}
where $P_t$ is the transmit power at each PA, while $s_{k}$, where $\mathbb{E}\{|s_{k}|^2\}=1$, is the symbol transmitted by the $k$-th waveguide, and $\mathbb{E}\{\cdot\}$ is the expectation operation; $h_{k,1}$ is the channel gain between the target and its nearest PA on the $k$-th waveguide, which is given by \cite{DingLoS}
\begin{equation}
    h_{k,1} = \frac{\sqrt{\eta} e^{-2\pi j \left (\frac{1}{\lambda}\|\mathbf{p}_t - \mathbf{p}_{k,1}\|+\frac{1}{\lambda_g}\|\mathbf{p}_{k,0} - \mathbf{p}_{k,1}\| \right )}}{\|\mathbf{p}_t - \mathbf{p}_{k,1}\|^{\alpha/2}},
\label{eq-channel}
\end{equation}
where $\eta = c^2/ 16\pi^2 f_c^2$, $c$ is the speed of light, $f_c$ is the carrier frequency, while $\lambda$ and $\lambda_g$ are the carrier and waveguide wavelength, respectively. Also, $w_n$ is the additive white Gaussian noise (AWGN) with variance $\sigma^2$, while $\alpha$ is the path-loss exponent. From \eqref{eq-received signal} and \eqref{eq-channel}, the SINR at the target associated with its nearest PA on the $k$-th waveguide is
\begin{equation}
\begin{split}
    \!\!\! \text{SINR}_{k}^{\text{PA}} &= \frac{P_t|h_{k,1} |^2}{\sum_{\substack{i=1, i\neq k}}^{} P_t|h_{i,1} |^2}
    \nonumber\\
    &=\frac{\eta P_t d_{k,1}^{-\alpha}}{\underbrace{\sum_{\substack{i=1, i\neq k}}^{} \eta P_t d_{i,1}^{-\alpha}}_{\text{interference from remaining PAs}} + \sigma^2} = \frac{d_{k,1}^{-\alpha}}{I + \sigma_n^2}, \\
\end{split}
\label{eq-SINR update}
\end{equation}
where $I = \sum_{\substack{i=1, i\neq k}}^{} d_{i,1}^{-\alpha}$, while $\sigma_n^2 = \frac{\sigma^2}{P_t \eta}$ is the normalized noise power. Since a line-of-sight (LoS) path between the target and the PA can be achieved by adjusting the PA location, the path-loss exponent typically ranges from $2$ to $2.8$ \cite{so-loc}.

\vspace{-0.5cm}

\subsection{RSS Measurements}

Considering a practical implementation of the PA-assisted localization system, the RSS-based scheme is considered. By assuming that the nearest predefined location on each waveguide has been determined, the RSS measured by the target user associated with the activated PA on the $k$-th waveguide is \cite{so-loc, hcso-RSS}
\begin{equation}
    r_{\text{RSS},k,1} = \ln P_{r,k,1} - \ln \eta - \ln P_t =  -\alpha \ln d_{k,1} + n_{p},
\label{eq-RSS model}
\end{equation}
where $P_{r,k,1}$ is the received signal power at the target user transmitted by the activated PA on the waveguide $k$, while $n_{p}$ is Gaussian distributed with variance $\sigma_{p}^2 = \frac{\ln 10}{10\alpha\mathbb{E}\{\text{SINR}_{k}^{\text{PA}}\}}$. Note that the value of $\alpha$ is normally obtained through field testing and calibration campaigns \cite{hcso-RSS}. By collecting sufficient RSS measurements, the location estimate can be obtained using maximum likelihood estimation (MLE) as described in \cite{hcso-RSS}.

\vspace{-0.4cm}

\section{Performance Analysis}

\subsection{Target Localizability}

To determine the target localizability, we evaluate the probability of detecting \textit{at least} $K$ waveguides during localization. Since the successful reception from the activated PA on the $K$-th nearest waveguide implies that all nearer waveguides are within the detection range, we investigate the probability that the SINR between the target and the selected PA on the $K$-th nearest waveguide exceeds a given threshold $\tau$, given by
\begin{equation}
\begin{split}
    \!\!\! P(\text{SINR}_{K}^{\text{PA}} > \tau) = P \left ( \frac{\eta P_t d_{K,1}^{-\alpha}}{{\sum_{\substack{i=1, i\neq K}}^{} \eta P_t d_{i,1}^{-\alpha}} + \sigma^2} > \tau \right ). \\
\end{split}
\label{eq-SINR multi}
\end{equation}
To derive a closed-form expression of \eqref{eq-SINR multi}, the dominant interference analysis is adopted \cite[Assumption 1]{he-rss}. Specifically, all interferers, except for the PA located on the nearest waveguide to the target, are approximated by their mean values. Then, we have the following proposition.

\begin{Proposition}
\label{prop-PA SINR}
    Considering a PA system with multiple waveguides, where the nearest predefined location to the target is activated to transmit downlink signals for localization, the probability of detecting at least $K$ waveguides is 
    \begin{equation}
    \begin{split}
        &\quad P({{\rm{SINR}}}_{K}^{\text{PA}} > \tau | d_{1,1}, d_{K,1}, \rho_{K}) \\ 
        &\approx \int_0^{R_a} \int_{0}^{d_{K}} F_{d_{K,1}|\rho_{K}} \left ( \left [\tau \left ( \mathbb{E}\{I\} + \sigma_n^2 \right ) \right ]^{-\frac{1}{\alpha}} \right ) \\ &\quad\quad \times f_{d_{1,1}}(r)f_{d_{K,1}|\rho_{K}}(x) f_{\rho_k}(y) dr dx dy  , \\
    \end{split}
    \label{eq-SINR multi close-form}
    \end{equation}
    where $f_{d_{1,1}}(\cdot)$, $f_{d_{K,1}|\rho_{K}}(\cdot)$, $F_{d_{K,1}|\rho_{K}}(\cdot)$, and $f_{\rho_k}(\cdot)$ are provided in \eqref{distribution - rho} and \eqref{d-CDF}. Conditioned on the distances $d_{1,1}$, $d_{K,1}$, $\rho_{K}$, the interference $I$ can be approximated by
    \begin{equation}
    \begin{split}
        &\quad \mathbb{E} \{I_{K} | d_{1,1}, d_{K,1}, \rho_1, \rho_{K} \} \\
        &\approx \begin{cases}
        &d_{1,1}^{-\alpha} + \frac{2(K-2)}{2-\alpha} \frac{\rho_{K}^{2-\alpha} - \rho_{1}^{2-\alpha}}{\rho_{K}^{2} - \rho_{1}^{2}} + \frac{2\pi\lambda_{l}}{\alpha-2} \rho_{K}^{2-\alpha}, \alpha>2,\\
        &d_{1,1}^{-2} + \frac{2(K-2)}{\rho_{K}^{2} - \rho_{1}^{2}} \ln \frac{\rho_k}{\rho_1} + 2\pi\lambda_{l}\ln\frac{R_a}{\rho_k},  \quad \quad \alpha=2.
        \end{cases}
    \end{split} 
    \end{equation} 
\end{Proposition}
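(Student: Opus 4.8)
The plan is to reduce the localizability to a single distance event on $d_{K,1}$, replace the random aggregate interference by a tractable conditional mean through the dominant-interference decomposition, and then average over the remaining distance distributions. First I would rewrite the detection event: since $\text{SINR}_K^{\text{PA}} = d_{K,1}^{-\alpha}/(I+\sigma_n^2)$ is monotone decreasing in $d_{K,1}$, the event $\{\text{SINR}_K^{\text{PA}} > \tau\}$ is equivalent to $\{d_{K,1} < [\tau(I+\sigma_n^2)]^{-1/\alpha}\}$. Conditioned on $I$ and on $\rho_K$, the probability of this event is exactly $F_{d_{K,1}|\rho_K}([\tau(I+\sigma_n^2)]^{-1/\alpha})$, with $F_{d_{K,1}|\rho_K}$ taken from \eqref{d-CDF}. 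Because the argument still carries the random aggregate $I$, the crux is to replace $I$ by a deterministic surrogate.

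The main computation is the dominant-interference approximation following \cite{he-rss}. I split $I=\sum_{i\neq K} d_{i,1}^{-\alpha}$ into the contribution of the closest (hence strongest) interferer, the activated PA on the nearest waveguide, which I keep as the random term $d_{1,1}^{-\alpha}$, and the remaining interferers, which I replace by their conditional mean. For that mean I approximate each interfering PA by the foot of the perpendicular of its waveguide, $d_{i,1}\approx\rho_i$, accurate when $\lambda_s$ is large. The remaining interferers then split into (i) the $K-2$ intermediate waveguides, whose feet I model as uniform in the annulus $\rho_1<r<\rho_K$ with density $2r/(\rho_K^2-\rho_1^2)$, and (ii) the far waveguides beyond the $K$-th, modeled as a 2D PPP of intensity $\lambda_l$ on $r>\rho_K$. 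Evaluating $(K-2)\,\mathbb{E}\{r^{-\alpha}\}$ over the annulus and $\int_{\rho_K}^{R_a} r^{-\alpha}\,2\pi\lambda_l r\,dr$ via Campbell's theorem for the tail gives the two closed-form terms. For $\alpha>2$ the tail integral converges as $R_a\to\infty$ and yields $\frac{2\pi\lambda_l}{\alpha-2}\rho_K^{2-\alpha}$, while $\int_{\rho_1}^{\rho_K} r^{1-\alpha}\,dr$ produces the intermediate term; for $\alpha=2$ the radial integrals become logarithmic and the tail must be truncated at $R_a$, producing the $\ln(R_a/\rho_K)$ and $\ln(\rho_K/\rho_1)$ forms. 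This establishes the stated expression for $\mathbb{E}\{I_K|d_{1,1},d_{K,1},\rho_1,\rho_K\}$, which is seen to be independent of $d_{K,1}$.

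Finally I would assemble the localizability. Substituting $\mathbb{E}\{I\}$ for $I$ in the conditional CDF gives $F_{d_{K,1}|\rho_K}([\tau(\mathbb{E}\{I\}+\sigma_n^2)]^{-1/\alpha})$, and averaging over the randomness that remains, namely the dominant-interferer distance $d_{1,1}$ with PDF $f_{d_{1,1}}$ and the ordered perpendicular distances with PDF $f_{\rho_k}$ from \eqref{distribution - rho}, together with the integration limits implied by \eqref{d-CDF}, recovers the multiple-integral expression \eqref{eq-SINR multi close-form}.

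\emph{The main obstacle} is the interference modeling rather than the integration. I expect the principal difficulty to be justifying the hybrid spatial model, that is, retaining the exact count $K-2$ for the intermediate waveguides while treating their feet as 2D-uniform in the annulus, and simultaneously invoking a 2D PPP of intensity $\lambda_l$ for the tail, and reconciling this with the 1D PPP of perpendicular distances of rate $2\pi\lambda_l$ implied by \eqref{distribution - rho}. Controlling the error of the $d_{i,1}\approx\rho_i$ substitution and handling the $\alpha=2$ boundary, where the tail integral diverges without the $R_a$ truncation, are the remaining technical points.
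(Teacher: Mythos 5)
Your proposal follows essentially the same route as the paper's own proof: inverting the SINR event into a conditional CDF on $d_{K,1}$, keeping the dominant interferer term $d_{1,1}^{-\alpha}$ random while replacing the remaining interference by its conditional mean via the $d_{i,1}\approx\rho_i$ substitution, the uniform-in-annulus model for the $K-2$ intermediate waveguides, the Campbell-type tail integral with intensity $2\pi\lambda_l r\,dr$, the $R_a$-truncated logarithmic forms for $\alpha=2$, and the final averaging over $f_{d_{1,1}}$, $f_{d_{K,1}|\rho_K}$, and $f_{\rho_k}$. The technical caveats you flag (the hybrid exact-count-plus-PPP spatial model and its tension with the 1D perpendicular-distance process of rate $2\pi\lambda_l$) are indeed present but left unaddressed in the paper's proof as well.
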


\begin{proof}
    See Appendix \ref{Prop-1}.
\end{proof}

\vspace{-0.2cm}

For RSS-based localization, at least three waveguides are required to localize the target position without ambiguity \cite{hcso-RSS}. \textbf{Proposition \ref{prop-PA SINR}} characterizes the probability of successfully conducting the localization procedure and illustrates how this metric is influenced by various network parameters. Based on \textbf{Proposition \ref{prop-PA SINR}}, the expected SINR is computed by: $\mathbb{E}\{ \text{SINR}_{k}^{\text{PA}}  \} = \int_{0}^{\infty} P( \text{SINR}_{k}^{\text{PA}} > \tau) d\tau$. Based on the expected SINR, the RSS disturbance $\sigma_p^2$ is obtained.

\vspace{-0.4cm}

\subsection{CRLB Analysis}

Based on the RSS signal model given by \eqref{eq-RSS model}, the Fisher information matrix (FIM) is given by
\begin{equation}
\begin{split}
&\text{FIM}_{\rm{RSS}}(\mathbf{p}_t) = \frac{1}{\sigma_{\text{RSS}}^2} \begin{bmatrix}
A & C \\
C & B \\
\end{bmatrix}, \\
\end{split}
\label{eq-39-1}
\end{equation}
\begin{equation}
\begin{split}
A &= \sum_{k=1}^{K}\frac{(x_{k,1}-x_{t})^{2}}{d_{k,1}^{4}}, \
B = \sum_{k=1}^{K} \frac{(y_{k,1}-y_{t})^{2}}{d_{k,1}^{4}}, \\
C &= \left ( \sum_{k=1}^{K} \frac{(x_{k,1}-x_{t})(y_{k,1}-y_{t})}{d_{k,1}^{4}} \right )^2,
\end{split}
\label{eq-41}
\end{equation}
where $\sigma_{\text{RSS}} = \frac{\sigma_{p}}{\alpha}$, and $x_t = y_t = 0$. Note that the following analysis remains valid even when the target is not located at the origin. The CRLB for RSS-based localization is then:  
\begin{equation}
    \begin{split}
       {\rm{CRLB}}_{\text{RSS}}(\mathbf{p}_t) =
       {\rm{tr}} \left ({{\text{FIM}}}_{{\rm{RSS}}}^{-1}(\mathbf{p}_t)\right ) = \sigma_{\text{RSS}}^2 \frac{A+B}{AB - C^2}.
    \end{split}
\label{eq-RSS CRLB exact}
\end{equation}
where $\text{tr}(\cdot)$ is the trace operation. In addition to deriving the CRLB using only the RSS samples detected from the nearest PA on each waveguide, the PA can be activated at different timestamps at all predefined locations on the waveguide for localization. In this case, the variables $A$, $B$, and $C$, can then be updated as follows:
\begin{equation}
\begin{split}
\!\!\!\!\bar{A} &= \sum_{k=1}^{K}\sum_{m=0}^{M-1}\frac{(x_{k,m}-x_{t})^{2}}{d_{k,m}^{4}}, \
\bar{B} = \sum_{k=1}^{K}\sum_{m=0}^{M-1} \frac{(y_{k,m}-y_{t})^{2}}{d_{k,m}^{4}}, \\
\bar{C} &= \left ( \sum_{k=1}^{K} \sum_{m=0}^{M-1} \frac{(x_{k,m}-x_{t})(y_{k,m}-y_{t})}{d_{k,m}^{4}} \right )^2,
\end{split}
\label{eq-update 41}
\end{equation}
where $M$ is the number of predefined locations on each waveguide. By substituting \eqref{eq-update 41} into \eqref{eq-RSS CRLB exact}, the RSS-based CRLB, considering all predefined locations for PA activation, can be obtained. However, \eqref{eq-RSS CRLB exact} involves the random variables $x_{k,m}$, $y_{k,m}$, and $d_{k,m}$, which makes it challenging to derive the distribution of the CRLB. Motivated by this challenge, we employ the concept of mutual information to derive a tractable approximation of the CRLB, as presented in the following.

\begin{Proposition}
\label{prop-PA RSS CRLB}
By assuming that $K$ waveguides are selected for localization, while each waveguide has $M$ predefined PA positions. Given the RSS disturbance $\sigma_{p}^2$, the RSS CRLB can be approximated by
\begin{equation}
{\rm{CRLB}}_{\text{RSS}}(\mathbf{p}_t) \approx \sigma_{\text{RSS}}^2 \frac{4}{M(K-1)}d_{*}^{2},
\label{eq-CRLB multi}
\end{equation}
where $d_{*} \in \{ d_{k,m} \}_{k=1, m=0}^{K,M-1}$ is the selected distance between the target user and the PA activated at the $m$-th predefined location on the $k$-th waveguide. Considering a practical scenario in which only a single PA on the waveguides is activated for localization, we can set $M=1$.
\end{Proposition}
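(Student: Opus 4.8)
The plan is to reduce the exact expression \eqref{eq-RSS CRLB exact} to a quantity that depends only on the pairwise angular geometry of the activated PAs, and then to average that geometry out. First I would write each PA in polar coordinates about the target, $x_{k,m}=d_{k,m}\cos\phi_{k,m}$ and $y_{k,m}=d_{k,m}\sin\phi_{k,m}$, and set $g_{k,m}=1/d_{k,m}^{2}$. With $x_t=y_t=0$, the scaled FIM entries in \eqref{eq-update 41} take the polar form $\bar A=\sum_{k,m}g_{k,m}\cos^{2}\phi_{k,m}$, $\bar B=\sum_{k,m}g_{k,m}\sin^{2}\phi_{k,m}$, and $\bar C=\sum_{k,m}g_{k,m}\cos\phi_{k,m}\sin\phi_{k,m}$, so the numerator collapses to $\bar A+\bar B=\sum_{k,m}g_{k,m}$. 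The key algebraic step is Lagrange's identity applied to the determinant, which gives $\bar A\bar B-\bar C^{2}=\sum_{(k,m)<(i,n)}g_{k,m}g_{i,n}\sin^{2}(\phi_{k,m}-\phi_{i,n})$, a weighted sum over all distinct PA pairs of the squared sine of their angular separation seen from the target. This cleanly splits the CRLB into a ``total information'' numerator and a ``geometric diversity'' denominator.

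Next I would introduce two averaging approximations. The $M$ PAs sharing a waveguide are nearly collinear as viewed from the target, so their direction angles are strongly correlated; from the mutual-information viewpoint these measurements carry essentially redundant angular information and contribute only to the aggregate power, not to the diversity. I would therefore collapse each waveguide into a single effective measurement with angle $\phi_k$ and weight $Mg\approx M/d_{*}^{2}$, representing every distance by the selected value $d_{*}$. The remaining diversity then lives entirely between the $K$ waveguides, whose inclination angles are uniform under the PLP, so I would replace each $\sin^{2}(\phi_k-\phi_i)$ by its circular mean $1/2$.

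With all effective weights equal to $M/d_{*}^{2}$, the numerator becomes $\bar A+\bar B\approx KM/d_{*}^{2}$, while the denominator becomes $\bar A\bar B-\bar C^{2}\approx (M/d_{*}^{2})^{2}\binom{K}{2}\tfrac12=(M/d_{*}^{2})^{2}K(K-1)/4$; their ratio is $4d_{*}^{2}/[M(K-1)]$, and multiplying by $\sigma_{\text{RSS}}^{2}$ yields \eqref{eq-CRLB multi}, with $M=1$ recovering the single-activated-PA case. I expect the main obstacle to be justifying the two averaging steps rather than the algebra: arguing that the within-waveguide PAs add power but essentially no angular diversity (the collinearity/mutual-information claim), and replacing the data-dependent factors $\sin^{2}(\cdot)$ and $d_{k,m}$ by the uniform-angle mean $1/2$ and the representative distance $d_{*}$. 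Tracking these replacements carefully is also what produces the $K-1$ rather than $K$ in the denominator, since the diversity term counts \emph{pairs} of waveguides, $\binom{K}{2}$, and not the waveguides themselves.
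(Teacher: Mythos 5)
Your proof is correct and reaches the stated approximation, but by a genuinely different route than the paper's. The paper's appendix proceeds by (i) selecting $d_{*}=d_{p,q}$ as the maximizer of a mutual information $I(D;d_{k,m}\mid M,K)$ whose differential entropies are evaluated through Monte-Carlo simulation, (ii) collapsing the $M$ PAs per waveguide onto the selected index $q$ (a factor $M$ in $\bar A,\bar B$ and a factor $M^{2}$ in $\bar C$ via Sedrakyan's inequality), and (iii) invoking an external result, \cite[Proposition~1]{he-rss}, to jump directly to $D=\bar A\bar B-\bar C^{2}\approx M^{2}K(K-1)/(4d_{p,q}^{4})$. You replace steps (i) and (iii): Lagrange's identity exposes the determinant as a pair sum $\sum g\,g'\sin^{2}(\Delta\phi)$, and the factor $K(K-1)/4$ then falls out of counting the $\binom{K}{2}$ cross-waveguide pairs and replacing $\sin^{2}$ by its isotropic mean $1/2$ --- precisely the angular averaging that the paper outsources to its citation. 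Your within-waveguide collapse is the same approximation as the paper's step (ii), justified geometrically rather than by Sedrakyan's inequality. What your route buys is self-containedness and transparency: it shows exactly where $K-1$ comes from (pair counting, not waveguide counting) and where the isotropy of the PLP enters; it also implicitly repairs the paper's notational slip of defining $\bar C$ as a squared sum while still writing $\bar A\bar B-\bar C^{2}$. What it gives up is the paper's concrete selection rule for $d_{*}$: you treat $d_{*}$ as a generic representative distance, which is acceptable since the proposition itself only calls it ``the selected distance,'' but it leaves the quality of the approximation dependent on an unspecified choice that the paper pins down by mutual-information maximization. One caveat on your justification: PAs on a waveguide are collinear in the plane but do not subtend a single angle at the target unless the waveguide passes through it, so the collinearity/redundancy argument is itself an approximation (as is the paper's step (ii)), valid when the per-waveguide angular spread seen from the target is small.
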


\begin{proof}
    See Appendix \ref{Prop-RSS CRLB}.
\end{proof}

\vspace{-0.4cm}

Compared to the lengthy and complex expression in \eqref{eq-RSS CRLB exact}, the approximate CRLB presented in \textbf{Proposition \ref{prop-PA RSS CRLB}} provides a concise and analytically tractable expression that effectively captures the localization performance. This simplified form serves as a valuable tool for performance evaluation without relying on computationally intensive simulations. It shows that \eqref{eq-CRLB multi} is a function of distance $d_{*}$, while the CDF of $d_{*}$ is given in \eqref{d-CDF}. The conditional RSS-based CRLB distribution is
\begin{equation}
\label{eq-crlb distribution}
\begin{split}
    &\quad P\left ( {\rm{CRLB}}_{\text{RSS}}(\mathbf{p}_t) \leq s \right | M, K, \sigma_{p}^2) \\
    &= P ( d_{*} \leq {\sqrt{sM(K-1)}}/{2\sigma_{p}} | M, K, \sigma_{p}^2 ). \\
\end{split}
\end{equation}
By substituting the CDF of $d_{*}$ into \eqref{eq-crlb distribution}, the CRLB distribution for the PA system is obtained. Compared to conventional analytical methods that assume a fixed network geometry, the stochastic geometric framework accommodates random network deployments. By leveraging the inherent randomness of the considered PA system, the CRLB distribution provides insights into how the number of waveguides and channel parameters affect the overall localization performance across all possible geometric configurations.

Although this study considers RSS for localization, it was suggested in \cite{ding2025pinchingantennaassistedisaccrlb} that ToA measurements can be utilized for target localization. However, the time synchronization among the deployed waveguides cannot be overlooked during the localization process. To mitigate this impact, time-difference-of-arrival (TDoA) localization can be applied. In our previous work \cite{HETDoA}, the approximate CRLB for the TDoA-based localization is: $ {\rm{CRLB}}_{\text{TDoA}}(\mathbf{p}_t) \approx \frac{\sigma_{\tau}^2 2(K-1)}{\sum_{k=2}^{K}(1+\cos^2\theta_k-2\cos\theta_{k})}$, where $\theta_k$ is the angle from the nearest PA to the target on the $k$-th waveguide, while the range variance $\sigma_{\tau}^2$ is computed by following \cite{so-loc}. The performance of the RSS-based method will be compared with that of using TDoA. For angle-based localization, estimating the AoA typically requires the use of antenna arrays \cite{zhou2025channelestimationmmwavepinchingantenna}, while a single PA or a target equipped with only a single antenna cannot extract angle information. 

\vspace{-0.3cm}

\section{Numerical Results}

\begin{figure}[t]
\centerline{\includegraphics[width=0.65\columnwidth]{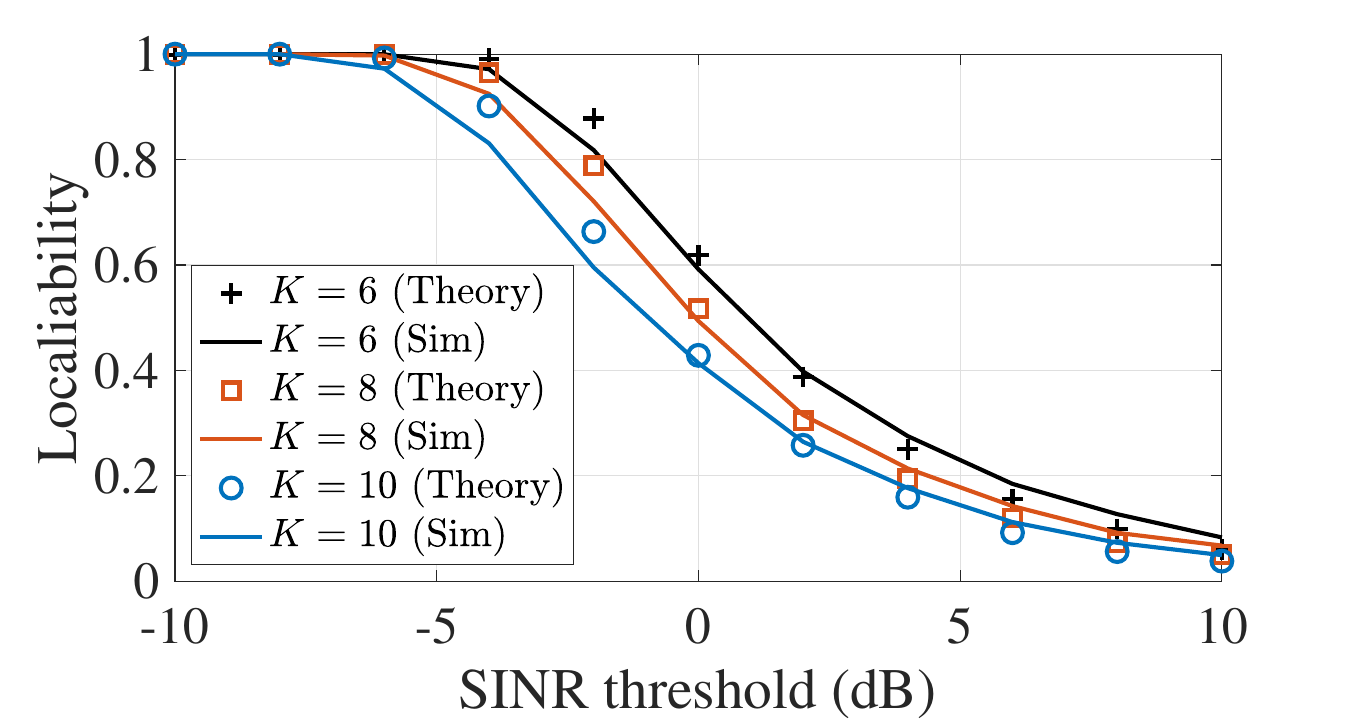}}
\centering
\caption{Accuracy of \textbf{Proposition \ref{prop-PA SINR}} versus different SINR thresholds.}
\label{fig-coverage}
\end{figure}

\begin{figure}[t]
\centerline{\includegraphics[width=0.65\columnwidth]{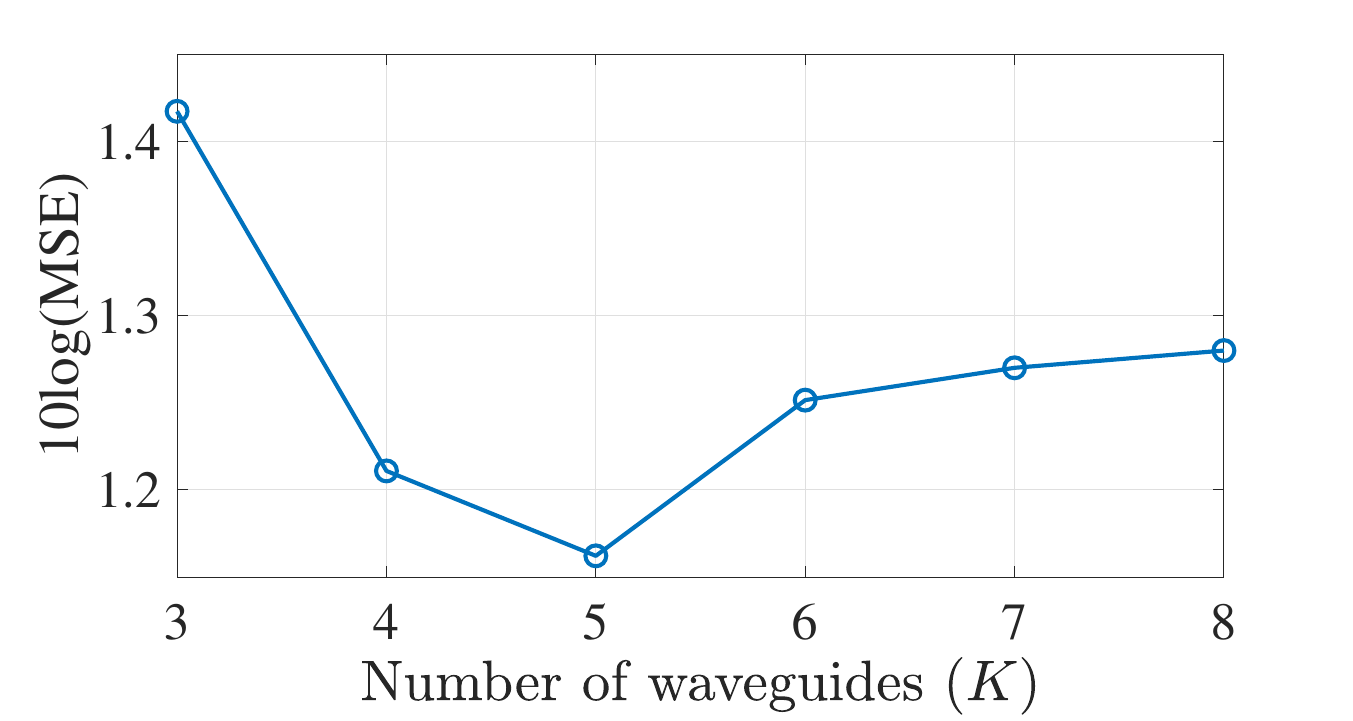}}
\centering
\caption{The localization performance versus different numbers of waveguides.}
\label{fig-CRLB SINR tradeoff}
\end{figure}

\begin{figure}[t]
\centerline{\includegraphics[width=0.65\columnwidth]{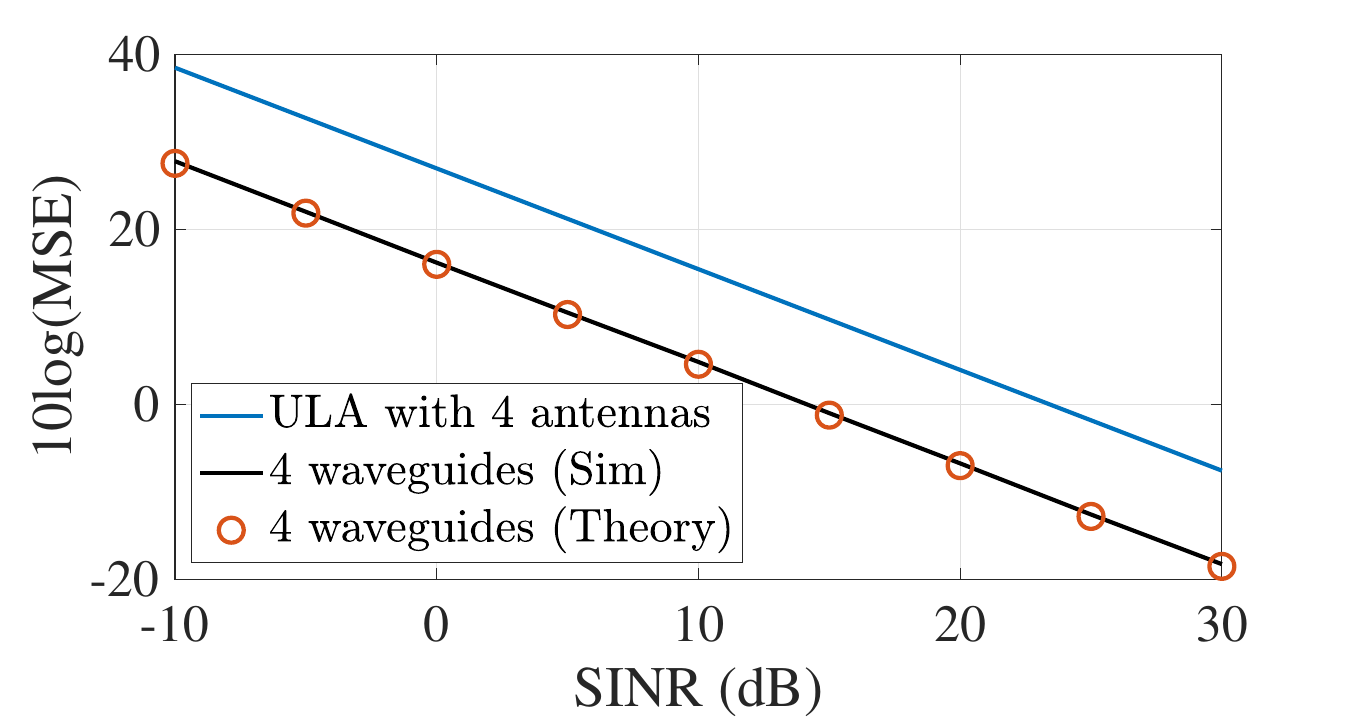}}
\centering
\caption{Accuracy of \textbf{Proposition \ref{prop-PA RSS CRLB}} versus different SINR values.}
\label{fig-CRLB accuracy}
\end{figure}

\begin{figure}[t]
\centerline{\includegraphics[width=0.645\columnwidth]{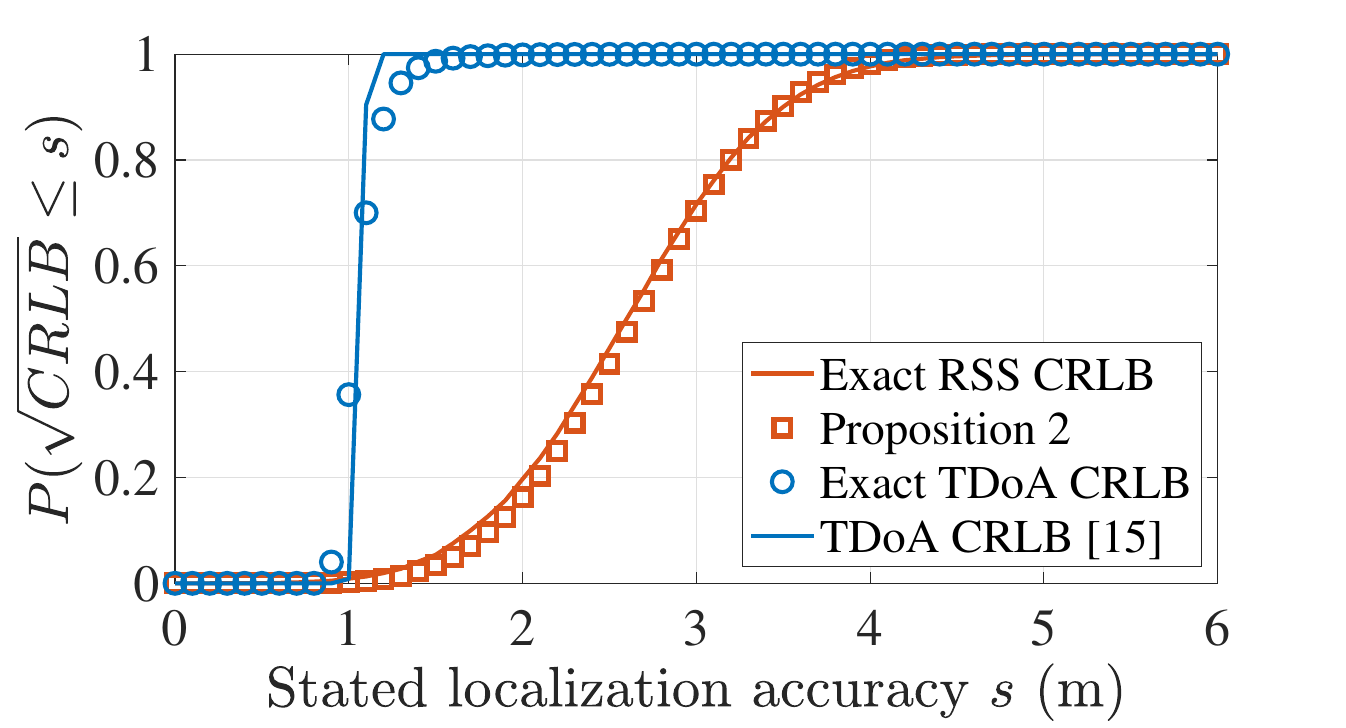}}
\centering
\caption{Accuracy of CRLB distribution versus localization requirements.}
\label{fig-CRLB distribution}
\end{figure}

\subsection{Simulation Setups}

For the simulation setup, waveguides are modeled using a PLP with line density $\lambda_l = 0.1/\pi~\mathrm{m}^{-1}$ in the 2-D disk with center point at the origin and radius $R_a = 30$ m, while the predefined locations on each waveguide are randomly distributed according to a 1-D HPPP with density $\lambda_s = 0.1~\mathrm{m}^{-1}$. Each PA transmits signals with a constant power of $P_t = 1$ W, and the path-loss exponent is set to $\alpha = 2.1$ \cite{wang2025wirelesssensingpinchingantennasystems}. All simulation results are averaged over $10^5$ independent runs.

\vspace{-0.4cm}

\subsection{Tradeoff between Localizability and CRLB}

Figure \ref{fig-coverage} illustrates the localizability performance of PA systems versus different SINR values. It is observed that localizability decreases as the required SINR increases. Although a higher SINR value guarantees better RSS quality, it is crucial to optimize network design to mitigate the impact of interference on localization performance. As the number of available waveguides increases, the overall localizability deteriorates. It is understandable that when different PAs transmit distinct pilots to the target for extracting RSS samples, an increase in the number of PAs results in greater interference.

Figure \ref{fig-CRLB SINR tradeoff} provides insights into how the number of available waveguides $K$ affects the overall localization performance. It is evident that increasing the number of waveguides does not always guarantee improved localization accuracy. When the number of available waveguides is limited (e.g., $K = 3$ to $5$), the localization performance improves with $K$ since the benefit of additional measurements outweighs the effect of interference. Notably, when $K = 5$, the RSS-based scheme achieves the lowest mean-square error (MSE), indicating the best localization performance. However, further increasing the number of waveguides results in worse localization performance, as interference becomes the dominant factor. Therefore, it is essential to determine the optimal number of waveguides for localization to balance the trade-off between interference and localization accuracy. 

\vspace{-0.5cm}

\subsection{Overall Localization Performance}

Figure \ref{fig-CRLB accuracy} examines the performance of the RSS-based localization across different SINR values. The theoretical results generated by \textbf{Proposition \ref{prop-PA RSS CRLB}} are consistent with the exact RSS-based CRLB obtained from \eqref{eq-RSS CRLB exact}. Furthermore, we compare PA system-assisted localization with conventional ULA-based localization. By adjusting the PA locations, the LoS distance between the PA and the target user is shortened, resulting in the acquisition of higher-quality RSS samples. The PA system also provides greater spatial diversity, whereas the ULA offers location-related measurement observations from only a single standpoint. Thus, the PA system systematically achieves better localization performance than the ULA. 

Figure \ref{fig-CRLB distribution} shows that the theoretical results generated by \eqref{eq-crlb distribution} closely match the simulation results, verifying the theoretical development of the CRLB distribution. Compared to the exact CRLB \eqref{eq-RSS CRLB exact}, the CRLB distribution provides insights into the overall localization performance. For instance, Fig. \ref{fig-CRLB distribution} shows that the RSS-based scheme achieves $3$-meter localization accuracy with a probability of $40\%$. We also compare the RSS-based scheme with the TDoA-based method \cite{HETDoA}. Since RSS measurements are less stable in signal propagation environments, while high-resolution TDoAs can be leveraged for localization, the TDoA-based scheme in PA systems outperforms that of using RSS. However, in practice, obtaining accurate time measurements in a PA system without the assistance of additional receivers, such as a ULA or LCC, remains an open challenge. Thus, the RSS-based localization scheme remains a favorable solution for PA systems.

\vspace{-0.4cm}

\section{Conclusion}

This paper presented a unified analytical framework for PA-assisted localization using stochastic geometry, modeling waveguides via a PLP. Our framework enables network-level performance analysis without intensive simulations and provides insights into selecting the optimal number of waveguides to achieve the optimal localization performance in terms of target localizability and CRLB. Numerical results showed that the PA-assisted system outperforms fixed-antenna systems, underscoring the benefits of using PAs for accurate positioning.

\vspace{-0.5cm}

\appendices

\section{}
\label{Prop-1}

When $\alpha > 2$, given distances $d_{1,1}$, $d_{K,1}$, and $\rho_k$, the probability of detecting $K$ waveguides is computed by
\begin{equation}
\begin{split}
    &\quad P(\text{SINR}_{K}^{\text{PA}} > \tau | d_{1,1}, d_{K,1}, \rho_{K}) \\
    &\approx F_{d_{K,1}|\rho_K} \left ( \left [\tau \left ( \mathbb{E}\{I\} + \sigma_n^2 \right ) \right ]^{-\frac{1}{\alpha}} \right ), \\
\end{split}
\label{eq-SINR multi 1}
\end{equation}
while the interference from the remaining $K-2$ PAs (excluding the PA on the nearest waveguide to the target) is
\begin{equation}
\begin{split}
    &\mathbb{E}\bigg \{ \sum_{\substack{i=2}}^{K-1}  d_{i,1}^{-\alpha} \ \bigg | \ d_{1,1}, d_{K,1}, \rho_1, \rho_{K}  \bigg \} \\
    =& \sum_{\substack{i=2}}^{K-1} \int_{d_{1,1}}^{d_{i+1,1}} r^{-\alpha} f_{d_{i,1}|\rho_{i}}(r) f_{\rho_{i}}(x) drdx \\
    \overset{(a)}{\approx}& \int_{\rho_1}^{\rho_{K}} \frac{2(K-2) r^{1-\alpha}}{\rho_{K}^2 - \rho_{1}^2} dr = \frac{2(K-2)}{2-\alpha} \frac{\rho_{K}^{2-\alpha} - \rho_{1}^{2-\alpha}}{\rho_{K}^{2} - \rho_{1}^{2}}, \\
\end{split}
\label{eq-mean IK}
\end{equation}
where (a) simplifies the interference computation from $K-2$ PAs by considering that the nearest path between the target and the activated PA on the waveguide is the perpendicular distance $\rho_k$ instead of $d_{k,1}$. Since the location on the waveguide associated with $\rho_k$ follows a 2-D HPPP, the PDF of $\rho_k$ is: $f_{\rho_k}(r) = \frac{2r}{\rho_{K}^2 - \rho_{1}^2}$. By substituting $f_{\rho_k}(x)$ into \eqref{eq-mean IK}, the final expression is obtained. Based on a similar idea, the interference outside the circular $\bm{b}(\textit{O}, \rho_{K})$ is: $\mathbb{E} \{ \sum_{i=K+1}^{\infty}  d_{i,1}^{-\alpha} \} = 2\pi\lambda_l \int_{\rho_{K}}^{\infty} r^{-\alpha+1} dr = \frac{2\pi\lambda_{\rho}}{\alpha-2} \rho_{K}^{2-\alpha}$. For the case of $\alpha=2$, the result can be derived in a similar manner.

\vspace{-0.4cm}

\section{}
\label{Prop-RSS CRLB}

Let the denominator of the RSS-based CRLB in \eqref{eq-RSS CRLB exact} be $D = \bar{A}\bar{B} - \bar{C}^2$. We aim to select a distance $d_{*} = d_{p,q} \in \{ d_{k,m} \}_{k=1, m=0}^{K,M-1}$ that provides the highest amount of information to $D$ by maximizing the mutual information, which is: $I\left( D ; d_{k,m}|M,K\right) = h\left(D|M,K\right) - h\left(D|d_{k,m},M,K\right)$, whereas the differential entropies are given by
\begin{equation}
\begin{split}
     h\left(D|M,K\right) &= -\int_{\Bar{I}_D}f_{ D}(D|M,K) \\
    &\times\log_{2}{f_{D}( D|M,K)} {\rm{d}} D, \\
    \!\!\!\!h\left(D|d_{k,m},M,K\right) &= -\int_{\Bar{ I}_D}\int_{\Bar{\rm{D}}}  f_{D;d_{k,m}}(D,d_{k,m}|M,K) \\
    &\times\log_{2}{f_{D}(D|d_{k,m},M,K)} {\rm{d}} D {\rm{d}}d_{k,m}, \\ 
\end{split}
\label{eq-entropy}
\end{equation}
where $f_{D}(\cdot)$ and $f_{D;d_{k,m}}(\cdot)$ are the PDFs of $D$ and the joint PDF of $D$ and $d_{k,m}$, respectively, while $\Bar{I}_D$ and $\Bar{D}$ denote the supports of $D$ and $d_{k,m}$. These PDFs can be obtained through Monte-Carlo simulations. Based on the selected $q$-th PA on the waveguides, we have 
\begin{equation}
\begin{split}
\bar{A} &\approx M\sum_{k=1}^{K} \frac{(x_{k,q}-x_{t})^{2}}{d_{k,q}^{4}}, \
\bar{B} \approx M \sum_{k=1}^{K} \frac{(y_{k,q}-y_{t})^{2}}{d_{k,q}^{4}}, \\
%\bar{C} &\overset{(a)}{\approx} M^2 \left ( \sum_{i=1}^{K} \frac{(x_{i,q}-x_{t})(y_{i,q}-y_{t})}{d_{i,q}^{4}} \right )^2 \\
\bar{C} &\overset{(a)}{\approx} M^2 \sum_{k=1}^{K} \frac{(x_{k,q}-x_{t})^2(y_{k,q}-y_{t})^2}{d_{k,q}^{8}}, \\
\end{split}
\end{equation}
where (a) follows Sedrakyan’s inequality. By selecting the $p$-th waveguide based on mutual information, it yields
\begin{equation}
    \begin{split}
        b &\approx M \sum_{k=1}^{K} \frac{1}{d_{k,q}^2} \approx \frac{MK}{d_{p,q}^2}, \
        D 
        \overset{(b)}{\approx} M^2 K(K-1)/4d_{p,q}^{4}, \\
    \end{split}
\label{eq-b D}
\end{equation}
where (b) applies \cite[Proposition 1]{he-rss}. By substituting \eqref{eq-b D} into \eqref{eq-RSS CRLB exact}, the proof is completed.

\vspace{-0.4cm}

\bibliographystyle{ieeetr}
\bibliography{ref}

\end{document}